\newcommand{\cF}{\mathcal{F}}
\newcommand{\bE}{\mathbb{E}}
\newcommand{\bP}{\mathbb{P}}
\newtheorem{thm}{Theorem}[section]
\theoremstyle{definition}
\newtheorem{rem}[thm]{Remark}
\numberwithin{equation}{section}
\begin{document}

\title[Merton portfolio problem with one indivisible asset]{Merton portfolio problem with one indivisible asset}

\author{Jakub Trybu{\l}a}

\address{\noindent Jakub Trybu{\l}a, \newline \indent Institute of Mathematics \newline \indent Faculty of Mathematics and Computer Science \newline \indent  Jagiellonian University in Krakow \newline \indent{\L}ojasiewicza  6 \newline \indent 30-348 Krak{\'o}w, Poland}

\email{jakub.trybula@im.uj.edu.pl}

\subjclass[2010]{93E20; 49L20}

\keywords{Investor problem, optimal stopping, HJB equation, indivisible asset.}

\begin{abstract}
In this paper we consider a modification of the classical Merton portfolio optimization problem. Namely, an investor can trade in financial asset and consume his capital. He is additionally endowed with a one unit of an indivisible asset which he can sell at any time. We give a numerical example of calculating the optimal time to sale the indivisible asset, the optimal consumption rate and the value function.
\end{abstract}

\maketitle

\section{Introduction}

The earliest application of stochastic control theory to finance was Merton's portfolio optimization problem which is formulated as follows. An investor has a positive capital and access to the market where he can freely buy and sale two assets: a share and a bond. He can also consume arbitrary part of his wealth. At any time the investor decides which proportion of his wealth to invest in risky asset and which part of his capital to consume. The purpose is to find proportion of wealth and consumption rate which maximizes investor's performance criterion. This problem has a well known solution. For more details see e.g. \cite{2,3,5} and also Section \ref{sec:1}.  

In this work we modify Merton's portfolio optimization problem. Namely, we assume that our investor has no access to a bank account but is endowed with a one unit of an indivisible financial asset which he can sell at any time. By indivisible asset we mean the asset which investor cannot divide and sale only just part of it. It can be for example a house, an apartment, a work of art, a car or simply one unit of share. The money received from the sale of this asset is immediately invested. 

The aim is to find the time to sale the indivisible asset and the consumption rate which maximizes the investor's performance functional. 

It turns out that the problem can be reduced to a one dimensional differential equation with a boundary conditions. However, if we add additional financial assets the problem becomes multidimensional and much harder to solve.

Investor problems are investigated in many particular cases. Typically, see e.g. \O{}ksendal and Sulem \cite{4} and original paper by Merton \cite{3}, one is able to find explicite solution to the HJB equation. In our case we are only able to find numerically solution assuming its smoothness. The problem if given HJB equation has a smooth solution or if the value function is smooth, has been only solved under the uniform ellipticity condition see e.g. \cite{2}, p. 165, p. 171, and references therein and \cite{5}, p. 51, for parabolic case. Unfortunately, our problem fails to be uniformly elliptic.

The problem of optimal selling was also formulated and studied in \cite{1}. The authors considered an investor with power utility who owns a single unit of an indivisible asset and who wishes to choose the optimum time to sell this asset. They showed that the optimal strategy is to sell the indivisible asset the first time that its value exceeds a certain proportion of the investor's wealth.

\section{Classical Merton problem}\label{sec:1}

Let $(\Omega,\cF,(\cF_{t}),\bP)$ be a filtered probability space. Define $\left(X_{t}\right)$ as the wealth process of the investor. Without any loss of generality we assume that the consumption rate is of the form $c_{t}X_{t}$, where the $(\cF_{t})$-adapted process $(c_{t})$ will be called the consumption. Then $X_{t}$ changes according to the stochastic differential equation
\[
\left\{ \begin{aligned}
dX_{t}&=\left(\mu -c_{t}\right)X_{t}dt+\sigma X_{t}dW_{t},\quad t>0,\\
X_{0}&=x>0,
\end{aligned}\right.
\]
where $\mu$, $\sigma>0$ are constants and $\left(W_{t}\right)$ is a Brownian motion. Let $\alpha\in(0,1)$ and the rate $\beta$ satisfies
\begin{equation}\label{beta1}
\beta>\max\left\{0,\alpha\mu+\frac{1}{2}\alpha(\alpha-1)\sigma^{2}\right\}.
\end{equation}
The objective is to find a consumption $\hat{c}$ such that
\[
J^{(\hat{c})}(x):=\sup J^{(c)}(x),
\]
where
\[
J^{(c)}(x):=\bE^{(x)}\left[\int_{0}^{+\infty}e^{-\beta t}\frac{(c_{t}X_{t})^{\alpha}}{\alpha}dt\right],
\]
and the supremum is taken over all admissible strategies $c$. We call 
\[
V(x):=J^{(\hat{c})}(x),
\]
the value function and interpret $J^{(c)}(x)$ as expected value of investor's consumption rate discounted at rate $\beta$. This classical Merton's problem has a well known solution, see e.g. \cite{2}, p. 168. Namely,
\[
V(x)=Ax^{\alpha}\quad\text{and}\quad \hat{c}_{t}=(A\alpha)^{\frac{1}{\alpha-1}},
\]
where
\begin{equation}\label{A}
A=\frac{1}{\alpha}\left[\frac{\beta-\mu\alpha-\frac{1}{2}\alpha(\alpha-1)\sigma^{2}}{1-\alpha}\right]^{\alpha-1}.
\end{equation}
Assumption \eqref{beta1} guarantees that the constant $A$ is well defined.

\section{Modified Merton problem}

Assume now that our investor is additionally endowed with a one unit of an indivisible asset. The value of this asset is given by the equation 
\[
\left\{ \begin{aligned}
dY_{t}&=\tilde{\mu} Y_{t}dt+\tilde{\sigma} Y_{t}d\tilde{W}_{t},\quad t>0,\\
Y_{0}&=y>0,
\end{aligned}\right.
\]
where $\tilde{\mu}$, $\tilde{\sigma}>0$ are constants and $(\tilde{W}_{t})$ is an independent of $(W_{t})$ one-dimensional Brownian motion. The investor can sale the indivisible asset at arbitrary time $\tau\geq 0$ and must immediately invest the money received from this sale in the risky asset. We assume that after the indivisible asset is sold the agent consumes his capital in an optimal fashion. It means that investor behaves in accordance with solution to the Merton portfolio optimization problem on interval $[\tau,+\infty)$ with initial value $x=X_{\tau}=X^{-}_{\tau}+Y_{\tau}$, where $X^{-}_{\tau}$ is value of investor's wealth process immediately prior to the sale of the indivisible asset. 

The aim is to find a stopping time $\hat{\tau}$ and a consumption $\hat{c}$ such that
\[
J^{(\hat{\tau},\hat{c})}(x,y):=\sup J^{(\tau,c)}(x,y),
\]
where
\[
J^{(\tau,c)}(x,y):=\bE^{(x,y)}\left[\int_{0}^{\tau}e^{-\beta t}\frac{(c_{t}X_{t})^{\alpha}}{\alpha}dt+e^{-\beta\tau}A\left(X_{\tau}+Y_{\tau}\right)^{\alpha}\chi_{\{\tau<+\infty\}}\right],
\]
$A$ is given by \eqref{A} and the supremum is taken over all admissible $\tau$ and $c$. We interpret $J^{(\tau,c)}(x,y)$ as the expected value of investor's consumption rate discounted at rate $\beta$ and increased by premium for sale of the indivisble asset at time $\tau$. 

\begin{rem}
Note that if $\beta<\alpha\tilde{\mu}+\frac{1}{2}\alpha(\alpha-1)\tilde{\sigma}^{2}$, then the problem has a trivial solution. Namely, assume that the time to sale indivisible asset is of the form $\tau_{n}=n$. Then 
\[
J^{(n,c)}(x,y)=\bE^{(x,y)}\left[\int_{0}^{n}e^{-\beta t}\frac{(c_{t}X_{t})^{\alpha}}{\alpha}dt+e^{-\beta n}A(X_{n}+Y_{n})^{\alpha}\right]
\]
\[
\geq \bE^{(x,y)}\left[e^{-\beta n}AY^{\alpha}_{n} \right]=Ay^{\alpha}\exp\left\{\left(-\beta +\alpha\tilde{\mu}+\frac{1}{2}\alpha(\alpha-1)\tilde{\sigma}^{2}\right)n\right\}.
\]
Hence, we have
\[
\lim_{n\to+\infty}J^{(n,c)}(x,y)=+\infty,
\]
so the value function is infinite and there is no optimal stopping time. It is always better to postpone the decision to sell. Therefore (see also \eqref{beta1}), we assume that
\begin{equation}\label{beta2}
\beta>\max\left\{0,\alpha\mu+\frac{1}{2}\alpha(\alpha-1)\sigma^{2},\alpha\tilde{\mu}+\frac{1}{2}\alpha(\alpha-1)\tilde{\sigma}^{2}\right\}.
\end{equation}
\end{rem}

\section{Solution by the Hamilton-Jacobi-Bellman verification theorem }\label{sec:4}

In this section using the verification theorem we reduce the problem to ordinary differential equation with boundary conditions.

\begin{thm}\label{twr:1} Let us assume that $A$ is a constant given by \eqref{A} and $\beta$ satisfies \eqref{beta2}. Suppose that 
\begin{equation}\label{coe:1}
(\alpha-1)\sigma^{2}<\tilde{\mu}-\mu <\frac{1}{2}(\alpha-1)\left(\sigma^{2}-\tilde{\sigma}^{2}\right).
\end{equation}
Moreover, suppose that $K\in C^{2}(0,+\infty)$ and $z^{*}>0$ are such that for every $z\in \left(z^{*},+\infty\right)$,
\begin{align}
A\left(z+1\right)^{\alpha}<&K(z)<A\left[z^{\alpha}+\left(z+1\right)^{\alpha}\right],\label{e:3}\\
0<&K'(z)<\frac{\alpha}{z}K(z) \label{e:4}
\end{align}
and
\[
\frac{1}{2}\left(\sigma^{2}+\tilde{\sigma}^{2}\right)z^{2}K''\left(z\right)+\left[\mu-\tilde{\mu}+(1-\alpha)\tilde{\sigma}^{2}\right]zK'\left(z\right)
\]
\begin{equation}\label{e:1}
+\left[-\beta+\alpha\tilde{\mu}+\frac{1}{2}\alpha(\alpha-1)\tilde{\sigma}^{2}\right]K\left(z\right)+\frac{1-\alpha}{\alpha}\left(K'\left(z\right)\right)^{\frac{\alpha}{\alpha-1}}=0,
\end{equation}
%\[
%\frac{1}{2}\left(\sigma^{2}+\tilde{\sigma}^{2}\right)z^{2}K''(z)+(1-\alpha)\left(K(z)+\frac{1+z}{\alpha}K'(z)\right)^{\frac{\alpha}{\alpha-1}}
%\]
%\[
%+\left[-\beta+\alpha \left(\frac{\mu z+\tilde{\mu}}{1+z}+\frac{1}{2}(\alpha-1)\frac{\sigma^{2} z^{2}+\tilde{\sigma}^{2}}{(1+z)^{2}}\right)\right]K(z)
%\]
%\begin{equation}\label{e:1}
%+\left[\left(\mu -\tilde{\mu}+\tilde{\sigma}^{2}\right)z+\frac{\alpha z\left(\sigma^{2} z-\tilde{\sigma}^{2}\right)}{1+z}\right]K'(z)=0,
%\end{equation}
with boundary conditions
\begin{equation}\label{e:2}
K(z^{*})=A\left(z^{*}+1\right)^{\alpha}\qquad\text{and}\qquad K'(z^{*})=A\alpha\left(z^{*}+1\right)^{\alpha-1}.
\end{equation}
%Moreover, let $K$ satisfies all other conditions of the verification theorem for optimal stopping and control see \cite{4}, Theorem 4.2, p. 62. 
Then
\begin{enumerate}
\item Free boundary $z^{*}$ satisfies 
\[
z^{*}\leq \frac{\mu-\tilde{\mu}+\frac{1}{2}(\alpha-1)(\sigma^{2}-\tilde{\sigma}^{2})}{\tilde{\mu}-\mu -(\alpha-1)\sigma^{2}}.
\]
\item Optimal time to sale the indivisible asset is of the form
\[
\hat{\tau}=\inf\left\{t\geq 0\colon \frac{X_{t}}{Y_{t}}\leq z^{*}\right\}.
\]
\item Optimal consumption rate is of the form
\[
\hat{c}_{t}X_{t}=\left\{ \begin{array}{ll}
Y_{t}\left(K\left(\frac{X_{t}}{Y_{t}}\right)\right)^{\frac{1}{\alpha-1}},& if \ t\in[0,\hat{\tau}),\\
(A\alpha)^{\frac{1}{\alpha-1}}X_{t},&if \ t\in[\hat{\tau},+\infty).\\
\end{array} \right.
\]
\item The value function is given by
\[
V(x,y)=\left\{ \begin{array}{ll}
A(x+y)^{\alpha},& if \ \frac{x}{y}\in [0,z^{*}),\\
y^{\alpha}K\left(\frac{x}{y}\right),& if \ \frac{x}{y}\in[z^{*},+\infty).\\
\end{array} \right.
\]
\end{enumerate}
\end{thm}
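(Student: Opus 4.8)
The plan is to recognise this as a combined optimal stopping and stochastic control problem and to show that the candidate $V$ of part (4) solves the associated Hamilton–Jacobi–Bellman variational inequality, after which a verification (It\^o plus supermartingale) argument delivers parts (2)--(4); part (1) will fall out as a consistency condition at the free boundary. Writing the controlled generator and the Hamiltonian operator as
\[
\mathcal{L}^{c}\phi=(\mu-c)x\phi_{x}+\tilde{\mu}y\phi_{y}+\tfrac{1}{2}\sigma^{2}x^{2}\phi_{xx}+\tfrac{1}{2}\tilde{\sigma}^{2}y^{2}\phi_{yy},\qquad H[\phi]:=\sup_{c\geq0}\Big[\mathcal{L}^{c}\phi-\beta\phi+\tfrac{(cx)^{\alpha}}{\alpha}\Big],
\]
the selling reward is $G(x,y):=A(x+y)^{\alpha}$, and the variational inequality is $\max\{H[V],\,G-V\}=0$. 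Maximising the running term over $u=cx\geq0$ (legitimate once $V_{x}>0$) gives the pointwise maximiser $\hat u=V_{x}^{1/(\alpha-1)}$ and optimised value $\tfrac{1-\alpha}{\alpha}V_{x}^{\alpha/(\alpha-1)}$. Substituting the homogeneity ansatz $V=y^{\alpha}K(z)$, $z=x/y$, a direct differentiation turns the continuation equation $H[V]=0$ into exactly the ODE \eqref{e:1}, while $\hat u=V_{x}^{1/(\alpha-1)}=Y_{t}\big(K'(Z_{t})\big)^{1/(\alpha-1)}$ yields the optimal consumption of part (3); the $C^{1}$ pasting of $y^{\alpha}K$ onto $G$ along $\{z=z^{*}\}$ is precisely the pair of conditions \eqref{e:2}.

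Next I would verify the variational inequality in each region. For $z>z^{*}$ (continuation) the ODE \eqref{e:1} gives $H[V]=0$, whereas $G-V=y^{\alpha}\big[A(z+1)^{\alpha}-K(z)\big]<0$ by the lower bound in \eqref{e:3}; the inequalities \eqref{e:4} supply $V_{x}=y^{\alpha-1}K'>0$ (justifying the Hamiltonian step) and $V_{y}=y^{\alpha-1}(\alpha K-zK')>0$. For $z<z^{*}$ (stopping) we have $V=G$, so $G-V=0$ and it remains to check $H[G]\leq0$. Using the Merton identity $\tfrac{1-\alpha}{\alpha}(A\alpha)^{\alpha/(\alpha-1)}=A\big(\beta-\mu\alpha-\tfrac12\alpha(\alpha-1)\sigma^{2}\big)$ implied by \eqref{A}, the quantity $H[G]$ collapses to $A\alpha(z+1)^{\alpha-2}y^{\alpha}\,\ell(z)$, where
\[
\ell(z)=\big[\tilde{\mu}-\mu-(\alpha-1)\sigma^{2}\big]\,z+\big[\tilde{\mu}-\mu+\tfrac12(\alpha-1)(\tilde{\sigma}^{2}-\sigma^{2})\big].
\]
By the left inequality in \eqref{coe:1} the slope of $\ell$ is positive, so $\ell\leq0$ on $[0,z^{*}]$ is equivalent to $\ell(z^{*})\leq0$; this single requirement is exactly the bound asserted in part (1).

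For part (1) itself I would argue that $\ell(z^{*})\leq0$ is forced by the hypotheses. Since $K$ and $K'$ agree with $G$'s profile $g(z)=A(z+1)^{\alpha}$ at $z^{*}$ by \eqref{e:2}, while $K(z)>g(z)$ just to the right by \eqref{e:3}, a second-order Taylor expansion of $K-g$ at $z^{*}$ forces $K''(z^{*})\geq g''(z^{*})=A\alpha(\alpha-1)(z^{*}+1)^{\alpha-2}$. Because the diffusion carries no $xy$-cross term, the only discrepancy between $H[G]$ and the continuation equation $H[y^{\alpha}K]$ at $z^{*}$ is the $K''$-term, giving $H[G]|_{z^{*}}=\tfrac12 y^{\alpha}z^{*2}(\sigma^{2}+\tilde{\sigma}^{2})\big(g''-K''\big)(z^{*})\leq0$; comparing with $H[G]|_{z^{*}}=A\alpha(z^{*}+1)^{\alpha-2}y^{\alpha}\ell(z^{*})$ and dividing by the positive prefactor yields $\ell(z^{*})\leq0$, i.e. the stated bound. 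The right inequality in \eqref{coe:1} guarantees this bound is a genuine positive number, so the free boundary is nondegenerate.

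Finally I would run the verification. For an arbitrary admissible pair $(\tau,c)$, applying It\^o to $e^{-\beta t}V(X_{t},Y_{t})$ on $[0,\tau\wedge T]$ and using $H[V]\leq0$, hence $\mathcal{L}^{c}V-\beta V+(c_{t}X_{t})^{\alpha}/\alpha\leq0$, gives, after taking expectations,
\[
V(x,y)\geq \bE^{(x,y)}\Big[\int_{0}^{\tau\wedge T}e^{-\beta t}\tfrac{(c_{t}X_{t})^{\alpha}}{\alpha}\,dt+e^{-\beta(\tau\wedge T)}V(X_{\tau\wedge T},Y_{\tau\wedge T})\Big].
\]
Letting $T\to+\infty$, the upper bound $V\leq Ax^{\alpha}+A(x+y)^{\alpha}$ from \eqref{e:3} together with \eqref{beta2} kills the terminal term on $\{\tau=+\infty\}$ (transversality, since $\bE[e^{-\beta T}X_{T}^{\alpha}]\to0$ and $\bE[e^{-\beta T}Y_{T}^{\alpha}]\to0$) and the obstacle $V\geq G$ replaces it by $A(X_{\tau}+Y_{\tau})^{\alpha}$ on $\{\tau<+\infty\}$, so $V\geq J^{(\tau,c)}$. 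For $\hat{\tau}=\inf\{t:X_{t}/Y_{t}\leq z^{*}\}$ and $\hat c$ as in part (3), the process stays in $\{z>z^{*}\}$ until $\hat\tau$ where \eqref{e:1} holds with equality, $\hat c$ attains the Hamiltonian supremum, and $V(X_{\hat\tau},Y_{\hat\tau})=A(X_{\hat\tau}+Y_{\hat\tau})^{\alpha}$ by \eqref{e:2}; every inequality becomes an equality, so $V=J^{(\hat\tau,\hat c)}$, which proves parts (2)--(4). I expect the two genuinely delicate points to be: the failure of uniform ellipticity, so that $V$ is only $C^{1}$ across $\{z=z^{*}\}$ and It\^o must be justified through a generalised (It\^o--Tanaka or mollification) argument rather than the classical formula; and the integrability bookkeeping that upgrades the stochastic integrals to true martingales and secures transversality, both of which are controlled precisely by the two-sided estimates \eqref{e:3} and \eqref{e:4}.
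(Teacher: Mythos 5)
Your proposal is correct, and its skeleton --- the HJB variational inequality $\max\{H[V],\,G-V\}=0$ together with the homogeneity reduction $V=y^{\alpha}K(x/y)$ and continuation region $\{x/y>z^{*}\}$ --- is exactly the paper's. You diverge from the paper in two substantive ways, both to your credit. For part (1), the paper substitutes $G(x,y)=A(x+y)^{\alpha}$ into the HJB inequality on the stopping region $\{0<x/y<z^{*}\}$, reduces it to $\ell(z)\leq 0$ there (your linear function $\ell$), and then simply records the resulting restriction on $z^{*}$: in the paper the bound is read off as a \emph{requirement} of condition \textbf{I} of the cited verification theorem, not deduced from the theorem's hypotheses on $K$ and $z^{*}$. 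You actually deduce it: smooth pasting \eqref{e:2} plus the lower bound in \eqref{e:3} force $K''(z^{*})\geq A\alpha(\alpha-1)(z^{*}+1)^{\alpha-2}$ by a second-order Taylor argument, and since $K$ and $A(z+1)^{\alpha}$ agree to first order at $z^{*}$, the optimized Hamiltonians of $G$ and of $y^{\alpha}K$ differ at $z^{*}$ only through the second-derivative term (the independence of the driving Brownian motions means no cross term appears); hence $H[G]|_{z^{*}}\leq H[y^{\alpha}K]|_{z^{*}}=0$, i.e.\ $\ell(z^{*})\leq 0$, which is the stated bound. This closes a genuine logical gap in the paper's treatment of part (1). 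Second, the paper delegates all stochastic analysis to the verification theorem of \O{}ksendal and Sulem, checking only its hypotheses \textbf{I}--\textbf{III}, whereas you spell out the It\^o/supermartingale/transversality argument and correctly isolate the two points where it is delicate (only $C^{1}$ regularity across the free boundary with no uniform ellipticity, hence the need for mollification or It\^o--Tanaka; and the integrability needed for transversality, controlled by \eqref{e:3}, \eqref{e:4} and \eqref{beta2}). Neither you nor the paper fully discharges these technicalities, but your account at least makes them visible.

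One discrepancy to flag: your first-order condition gives $\hat{c}_{t}X_{t}=Y_{t}\bigl(K'(X_{t}/Y_{t})\bigr)^{\frac{1}{\alpha-1}}$, with $K'$, whereas part (3) of the statement (and the corresponding display in the paper's proof) has $K$ in place of $K'$. Yours is the correct formula: the maximizer of $u\mapsto -u\psi_{x}+u^{\alpha}/\alpha$ is $u=\psi_{x}^{1/(\alpha-1)}=y\bigl(K'(z)\bigr)^{1/(\alpha-1)}$, and only this choice is consistent with the term $\frac{1-\alpha}{\alpha}\bigl(K'\bigr)^{\frac{\alpha}{\alpha-1}}$ appearing in \eqref{e:1}. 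So the prime is simply missing in the paper, and your argument proves the corrected statement.
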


\begin{proof} Let $\mathcal{O}:=(0,+\infty)\times(0,+\infty)$ and define
\begin{align}
L^{c}\varphi(x,y):=& (\mu-c)x\varphi_{x}(x,y)+\tilde{\mu}y\varphi_{y}(x,y) \\
&+\frac{1}{2}(\sigma x)^{2}\varphi_{xx}(x,y)+\frac{1}{2}(\tilde{\sigma}y)^{2} \varphi_{yy}(x,y),  \notag
\end{align}
for all functions $\varphi\colon\mathcal{O}\to\mathbb{R}$ which are twice differentiable at $(x,y)$. According to the verification theorem (see e.g. \cite{4}, p. 62) we are looking for the continuation region $D\subset \mathcal{O}$ and function $\psi\colon \mathcal{O}\to\mathbb{R}$ of class $C^{2}(\mathcal{O}\backslash\partial D)\cap C^{1}(\partial D)$ such that 
\begin{enumerate}
\item[\textbf{I.}] For all $(x,y)\in\mathcal{O}\backslash\bar{D}$ and for all $c\geq 0$, 
\begin{equation}\label{b:1}
\psi(x,y)=A\left(x+y\right)^{\alpha},
\end{equation}
and 
\begin{equation}\label{b:2}
-\beta\psi(x,y)+L^{c}\psi(x,y)+\frac{(cx)^{\alpha}}{\alpha}\leq 0.
\end{equation}
\item[\textbf{II.}] For $(x^{*},y^{*})\in\partial D$, 
\begin{equation}\label{b:5}
\psi(x^{*},y^{*})=A\left(x^{*}+y^{*}\right)^{\alpha},
\end{equation}
\begin{equation}\label{b:6}
\psi_{x}(x^{*},y^{*})=A\alpha\left(x^{*}+y^{*}\right)^{\alpha-1}\ \text{and }\ \psi_{y}(x^{*},y^{*})=A\alpha\left(x^{*}+y^{*}\right)^{\alpha-1}.
\end{equation}
\item[\textbf{III.}] For all $(x,y)\in D$,
\begin{equation}\label{e:6}
\psi(x,y)>A\left(x+y\right)^{\alpha}
\end{equation}
and 
\begin{equation}\label{b:7}
-\beta\psi(x,y)+\sup_{c\geq 0}\left\{L^{c}\psi(x,y)+\frac{(cx)^{\alpha}}{\alpha}\right\}=0.
\end{equation}
\end{enumerate}

The key idea of the proof relies on the observation that our problem can be reduced to one-dimensional. Namely, we expect $\psi(x,y)$ to be of the form
\begin{equation}\label{A:2}
\psi(x,y):=y^{\alpha}K\left(\frac{x}{y}\right)\quad \text{for all}\ (x,y)\in \bar{D},
\end{equation}
where $K\colon (0,+\infty)\to\mathbb{R}$ is unknown function. Moreover, let
\begin{equation}\label{D}
D:=\left\{(x,y)\in \mathcal{O}\colon \frac{x}{y}\in\left(\frac{x^{*}}{y^{*}},+\infty\right)\right\},
\end{equation}
where $x^{*}/y^{*}=:z^{*}$ is unknown free boundary. 

Summarizing, we need to find the free boundary $z^{*}>0$ and the function $K$ which satisfy the three conditions above. 

Now we verify the conditions \textbf{I}, \textbf{II} and \textbf{III} for the function \eqref{A:2} and continuation region \eqref{D}. 

\noindent\textbf{Ad. I.} In \textbf{I} the function $K$ does not appear but if we put \eqref{b:1} into inequality \eqref{b:2} we get a condition for the free boundary. Namely, for all
\[
(x,y)\in\mathcal{O}\backslash\bar{D}=\left\{(x,y)\in \mathcal{O}\colon 0<\frac{x}{y}<z^{*}\right\},
\]
we have that
\[
-\beta\psi(x,y)+\sup_{c\geq 0}\left\{L^{c}\psi(x,y)+\frac{(cx)^{\alpha}}{\alpha}\right\}\leq 0,
\]
which can be stated equivalently as follows
\[
-\mu-\frac{1}{2}(\alpha-1)\sigma^{2}+\mu\frac{x}{x+y} +\tilde{\mu}\frac{y}{x+y}
\]
\[
+\frac{1}{2}(\alpha-1)\sigma^{2}\frac{x^{2}}{(x+y)^{2}}+\frac{1}{2}(\alpha-1)\tilde{\sigma}^{2}\frac{y^{2}}{(x+y)^{2}}\leq 0.
\]
Putting $z:=x/y$ we obtain
\[
\left[\tilde{\mu}-\mu -(\alpha-1)\sigma^{2}\right]z+\tilde{\mu}-\mu+\frac{1}{2}(\alpha-1)(\tilde{\sigma}^{2}-\sigma^{2})\leq 0.
\]
If $z\to 0$, then assumption \eqref{coe:1} guarantees that above inequality is satisfied. Moreover, we have
\[
z^{*}\leq \frac{\mu-\tilde{\mu}+\frac{1}{2}(\alpha-1)(\sigma^{2}-\tilde{\sigma}^{2})}{\tilde{\mu}-\mu -(\alpha-1)\sigma^{2}}.
\]

\noindent\textbf{Ad. II.} For $(x^{*},y^{*})\in\partial D$, from conditions \eqref{b:5} and \eqref{b:6}, we have
\[
K(z^{*})=A\left(z^{*}+1\right)^{\alpha}\qquad\text{and}\qquad K'(z^{*})=A\alpha\left(z^{*}+1\right)^{\alpha-1}.
\]

\noindent\textbf{Ad. III.} For all $(x,y)\in D$, the supremum in \ref{b:7} is attained at
\begin{equation}\label{b:8}
\hat{c}=\frac{y}{x}\left(K\left(\frac{x}{y}\right)\right)^{\frac{1}{\alpha-1}}.
\end{equation}
%\begin{equation}\label{b:8}
%\hat{c}=\frac{\left(\psi_{x}(x,y)\right)^{\frac{1}{\alpha-1}}}{x}.
%\end{equation}
Thus from the verification theorem we know that 
\[
\hat{c}_{t}X_{t}=Y_{t}\left(K\left(\frac{X_{t}}{Y_{t}}\right)\right)^{\frac{1}{\alpha-1}}.
\]
If we put \eqref{b:8} into HJB equation we obtain
\[
\frac{1}{2}\left(\sigma^{2}+\tilde{\sigma}^{2}\right)\frac{x^{2}}{ y^{2}}K''\left(\frac{x}{y}\right)+\left[\mu-\tilde{\mu}+(1-\alpha)\tilde{\sigma}^{2}\right]\frac{x}{y}K'\left(\frac{x}{y}\right)
\]
\[
+\left[-\beta+\alpha\tilde{\mu}+\frac{1}{2}\alpha(\alpha-1)\tilde{\sigma}^{2}\right]K\left(\frac{x}{y}\right)+\frac{1-\alpha}{\alpha}\left(K'\left(\frac{x}{y}\right)\right)^{\frac{\alpha}{\alpha-1}}=0,
\]
for all $(x,y)\in D$. Putting $z:=x/y$ we get the equation \eqref{e:1}.

Summing up we reduced our problem to ordinary differential equation with boundary conditions. 
\begin{rem}
Condition \eqref{e:6} is of the form 
\[
A\left(z+1\right)^{\alpha}<K\left(z\right)\quad \text{for all}\ z\in \left(z^{*},+\infty\right).
\]
\end{rem}
\begin{rem}
For all $(x,y)\in D$, using the solution to Merton's classical portfolio problem, we get 
\[
J^{(\tau,c)}(x,y)\leq \mathbb{E}^{(x,y)}\left[\int_{0}^{+\infty}e^{-\beta t}\frac{(c_{t}X_{t})^{\alpha}}{\alpha}dt\right]+\mathbb{E}^{(x,y)}\left[e^{-\beta\tau}A\left(X_{\tau}+Y_{\tau}\right)^{\alpha}\chi_{\{\tau<+\infty\}}\right]
\]
\[
< Ax^{\alpha}+A(x+y)^{\alpha}.
\]
For this reason the function $K$ must satisfy
\[
K\left(z\right)< A\left[z^{\alpha}+\left(z+1\right)^{\alpha}\right]\quad \text{for all}\ z\in \left(z^{*},+\infty\right).
\]
\end{rem}
\begin{rem}
The value function is increasing function of the initial conditions. It means that $\psi_{x}(x,y)>0$ and $\psi_{y}(x,y)>0$ for every $(x,y)\in D$. Therefore we obtain
\begin{equation}\label{g:3}
0<K'(z)<\frac{\alpha}{z}K(z)\quad \text{for all}\ z\in \left(z^{*},+\infty\right).
\end{equation}
Note that by \eqref{g:3} the HJB equation is well defined.
\end{rem}
\begin{rem}
From the verification theorem the optimal stopping time is of the form
\[
\hat{\tau}=\inf\left\{t\geq 0\colon \left(X_{t},Y_{t}\right)\notin D\right\}.
\]
It means that the optimal time to sell the indivisible asset is
\[
\hat{\tau}=\inf\left\{t\geq 0\colon \frac{X_{t}}{Y_{t}}\leq z^{*}\right\}.
\]
\end{rem}

\end{proof}

\section{Numerical example}\label{sec:5}

In this section we give a numerical example of calculating the optimal time to sale the indivisible asset, the optimal consumption rate and the value function. 

Let us assume that $\beta$, $\mu$, $\sigma$, $\tilde{\mu}$, $\tilde{\sigma}$ and $\alpha$ satisfy the assumptions of Theorem \ref{twr:1}. Then for every $z^{*}\in (0,\tilde{z}]$ we can numerically solve the HJB equation with the boundary conditions. The numerical observations (using Mathematica 9.0) suggest that  
\begin{enumerate}
\item There is only one $z^{*}=\hat{z}$ for which the equation (\ref{e:1}) with the boundary conditions (\ref{e:2}) has a global solution $K(z)$ which satisfies the conditions (\ref{e:3}) and (\ref{e:4}). 
\item For $z^{*}\in\left(0,\hat{z}\right)$ the conditions (\ref{e:3}) and (\ref{e:4}) are not satisfied, that is either the solution grows too fast or its derivative does not satisfy the constraints in (\ref{e:4}).
\item For $z^{*}\in\left(\hat{z},\tilde{z}\right]$ the solution is only local, that is the acceleration goes to $-\infty$ or the conditions (\ref{e:3}) and (\ref{e:4}) are not satisfied. 
\end{enumerate}
Now we present numerical solution for parameters
\[
\beta=2,\quad \mu=1,\quad \sigma=1,\quad \tilde{\mu}=0.5,\quad \tilde{\sigma}=0.5\quad \text{and}\quad\alpha=\frac{1}{3}.
\]
Note that the assumptions of Theorem \ref{twr:1} are satisfied. Then we have that $A=1.56006$, $\tilde{z}=1.5$ and from numerical observations we get
\[
\hat{z}\approx 1.3169624.
\]
Let $K_{i}(z)$ be the solution to equation (\ref{e:1}) with boundary conditions
\[
K_{i}(z^{*}_{i})=A\left(z^{*}_{i}+1\right)^{\alpha}\quad\text{and}\quad K'_{i}(z^{*}_{i})=A\alpha\left(z^{*}_{i}+1\right)^{\alpha-1}.
\] 
From assumption $(\ref{e:3})$ we have that $K_{i}(z)$ must satisfy the following inequality
\[
B^{l}(z)<K_{i}(z)<B^{u}(z),
\]
where
\[
B^{l}(z)=A\left(z+1\right)^{\alpha}\quad\text{and}\quad B^{u}(z)=A\left[z^{\alpha}+\left(z+1\right)^{\alpha}\right].
\]
Moreover, from $(\ref{e:4})$ the function $K'_{i}(z)$ must live between two curves $C^{l}_{i}(z)$ and $C^{u}_{i}(z)$, where
\[
C^{l}_{i}(z)=0\quad\text{and}\quad C^{u}_{i}(z)=\frac{\alpha}{z}K_{i}(z).
\]
In the figure below we present the functions $K_{i}(z)$ for $i=1,\ldots,5$, where
\[
z^{*}_{1}=1,\quad z^{*}_{2}=1.31696,\quad z^{*}_{3}=\hat{z},\quad z^{*}_{4}=1.31697,\quad\text{and}\quad z^{*}_{5}=\tilde{z}.
\]
\begin{center}
\includegraphics[width=13cm,height=8cm]{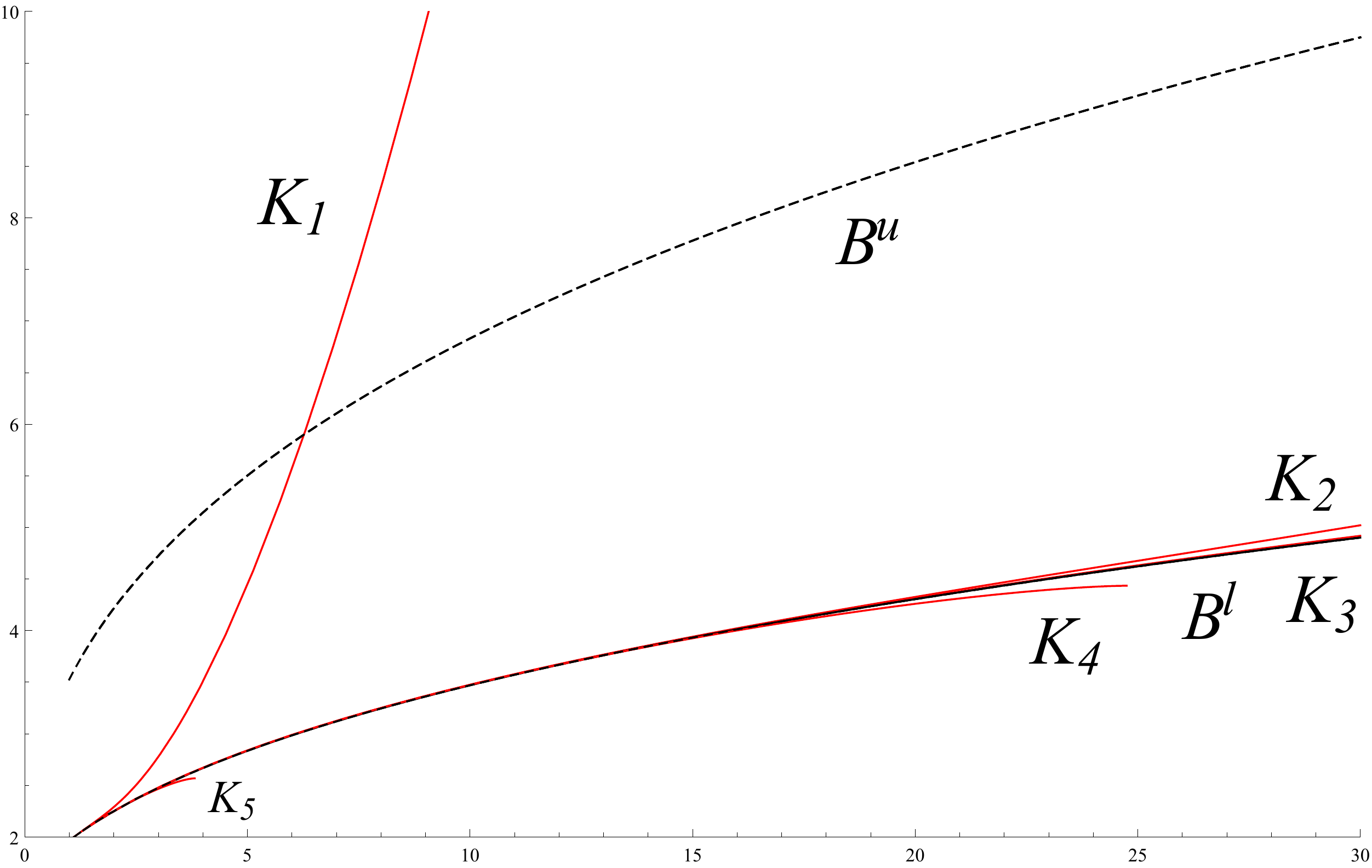}
\end{center}
Note that the function $K_{3}(z)$ almost coincides with the function $B^{l}(z)$. The difference between them is visible in the next figure on a different scale. 
\begin{center}
\includegraphics[width=13cm,height=8cm]{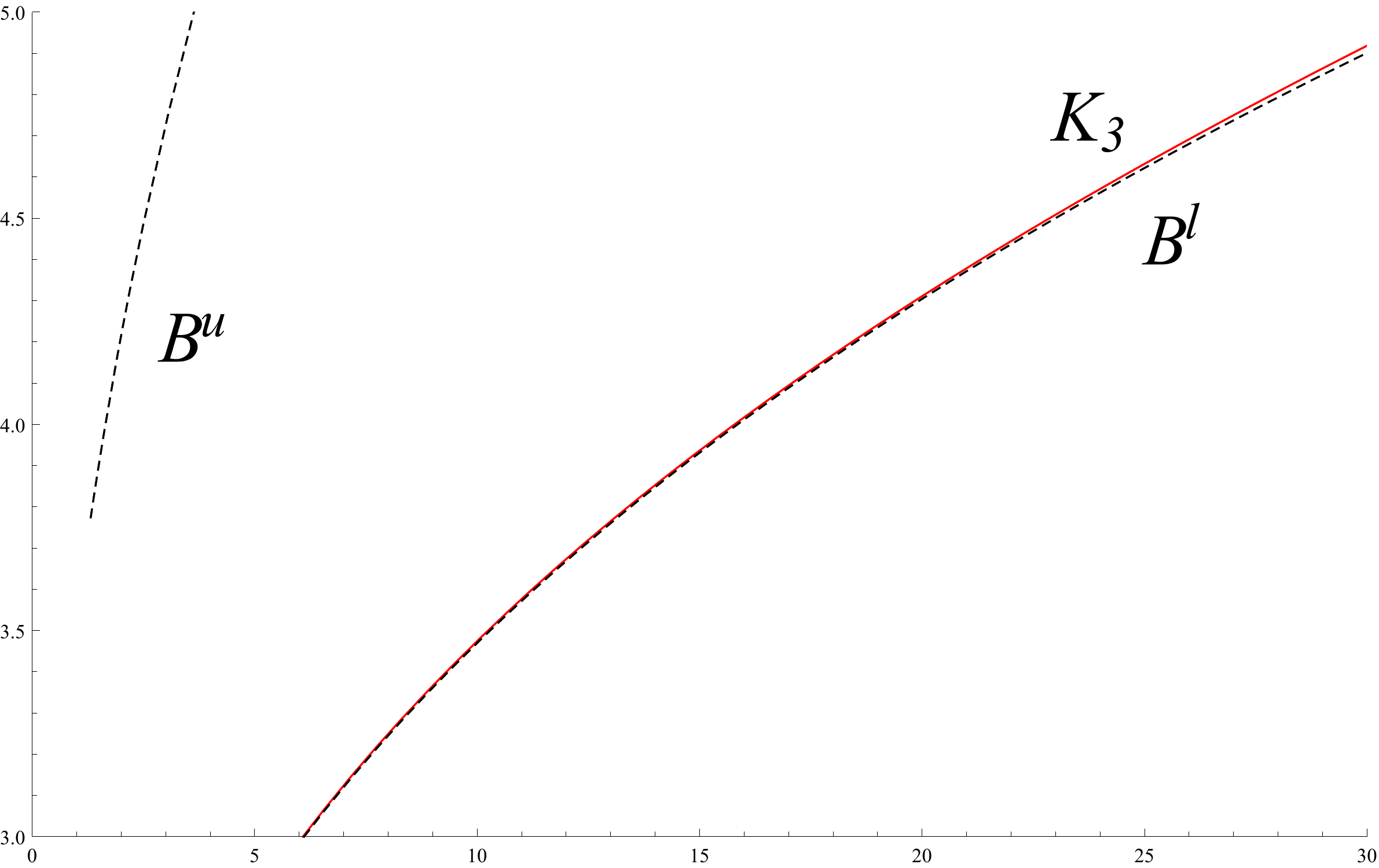}
\end{center}
Further, we present $K'_{3}(z)$.
\begin{center}
\includegraphics[width=13cm,height=8cm]{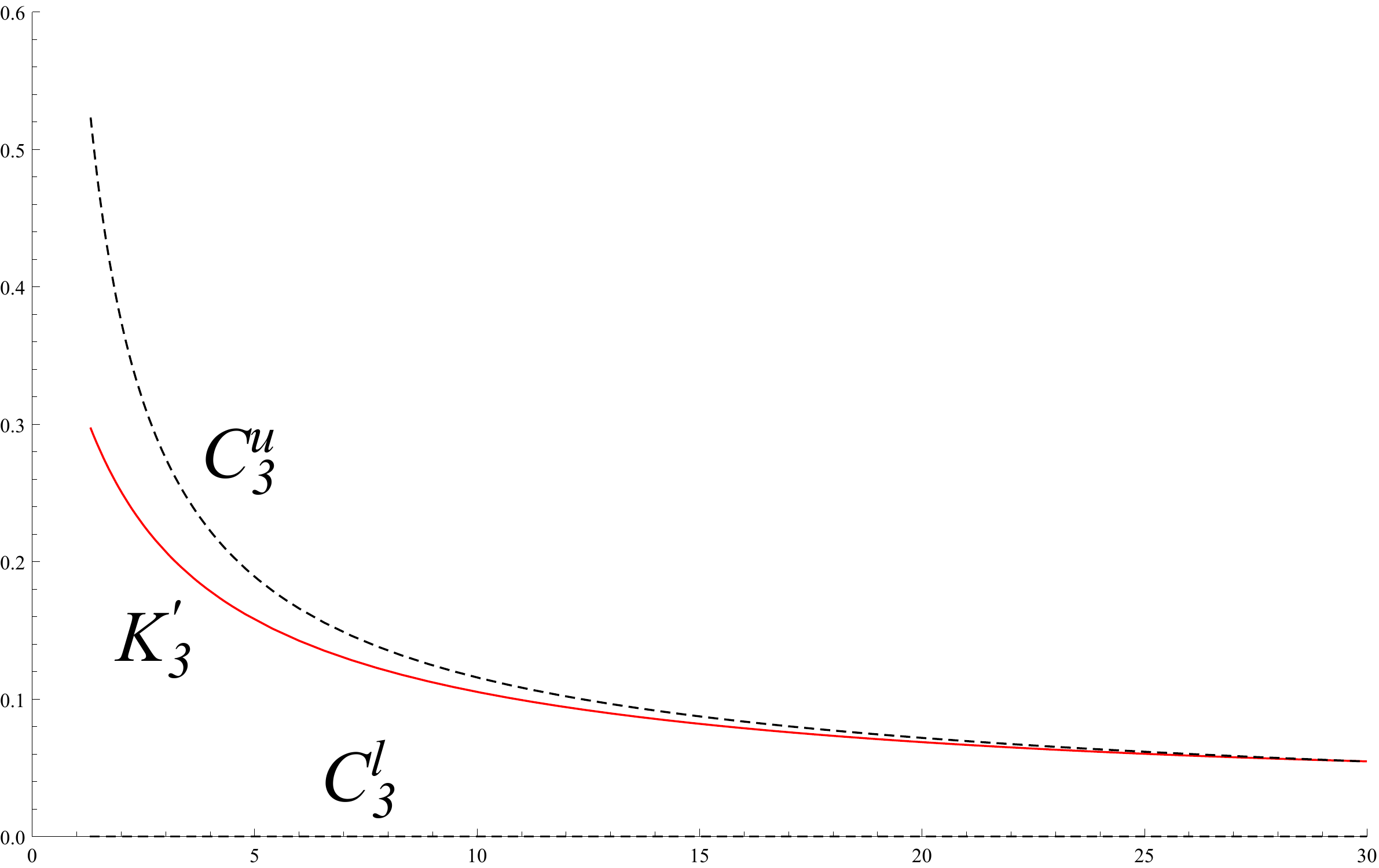}
\end{center}
Finally, in the next figure we show $K'_{2}(z)$.
\begin{center}
\includegraphics[width=13cm,height=8cm]{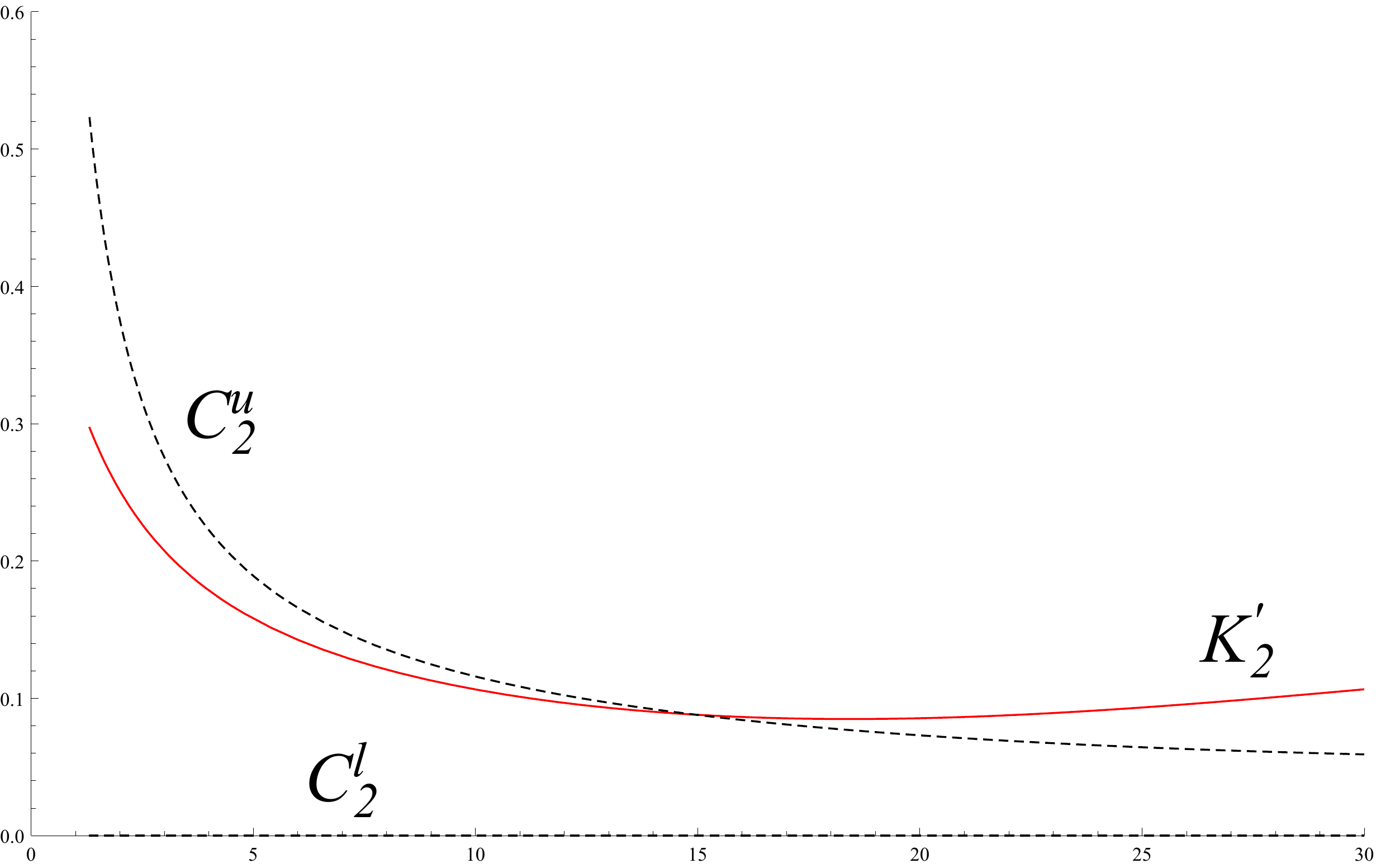}
\end{center}

\subsection*{Acknowledgements}
The author is very grateful to Professor Szymon Peszat for his support, meetings, helpful comments and suggestions.


\begin{thebibliography}{HD}


\normalsize
\baselineskip=17pt

%%%%%%%%%%%%%


\bibitem{1} J. Evans, V. Henderson, D. Hobson, {\it Optimal timing for an indivisible asset sale}, Math. Finance, 18 (2008), 545  -- 567.

\bibitem{8} E. Fabre, G. Royer, N. Touzi, {\it Liquidation of an indivisible asset with independent investment},  arXiv:1312.2754v2.

\bibitem{2} W. H. Fleming, H. M. Soner, {\it Controlled Markov Processes and Viscosity Solutions}, 2nd Edition, Springer, New York, 2006.

\bibitem{6} V. Henderson, D. Hobson, {\it An explicit solution for an optimal stopping/optimal control problem which models an asset sale}, Annals of Applied Probability, 18(5) (2008), 1681 -- 1705.

\bibitem{9} V. Henderson, D. Hobson, {\it Risk aversion, indivisible timing options, and gambling}, Operations Research, 61 (2013), 126 -- 137.

\bibitem{7} I. Karatzas, H. Wang, {\it Utility maximization with discretionary stopping}, SIAM J. Control Optim. 39 (2000), 306 -- 329.
 
\bibitem{3} R.C. Merton, {\it Optimal consumption and portfolio rules in continuous time},
J. Econ. Theory, 3 (1971), 373 -- 413.

\bibitem{4} B. \O{}ksendal, A. Sulem, {\it Applied Stochastic Control of Jump Diffusions}, Springer-Verlag, Berlin, 2005.

\bibitem{5} H. Pham, {\it Continuous-time Stochastic Control and Optimization with Financial Applications}, Springer-Verlag, Berlin, Heidelberg, 2009.

\end{thebibliography}
\end{document}